\newtheorem{theorem}{Theorem}%[section]
\newtheorem{claim}[theorem]{Claim}
\newtheorem{corollary}[theorem]{Corollary}
\newtheorem{definition}[theorem]{Definition}
\newtheorem{lemma}[theorem]{Lemma}
\newtheorem{proposition}[theorem]{Proposition}
\newtheorem{remark}[theorem]{Remark}
\newcommand{\scr}[1]{\ensuremath{\mathcal {#1}}}
\newcommand{\eps}{\varepsilon}
\newcommand{\sq}[1]{\ensuremath{\langle#1\rangle}}
\newcommand{\notarrow}{\kern .42em\not\kern -.42em\longrightarrow}
\newcommand{\ket}[1]{\ensuremath{|#1\rangle}}
\newcommand{\bra}[1]{\ensuremath{\langle#1|}}
\renewcommand{\phi}{\varphi}
\newcommand{\nm}[1]{\ensuremath{\Vert #1 \Vert}}
\renewcommand{\a}{\vec a}
\newcommand{\B}{\mathcal B}
\newcommand{\F}{\mathcal F}
\renewcommand{\H}{\mathcal H}
\newcommand{\K}{\mathcal K}
\renewcommand{\L}{\mathcal L}
\newcommand{\M}{\mathcal M}
\newcommand{\m}{\vec m}
\newcommand{\n}{\vec n}
\renewcommand{\O}{\mathcal O}
\newcommand{\R}{\mathbb R}
\newcommand{\s}{\vec\sigma}
\newcommand{\T}{\mathcal T}
\newcommand{\V}{\mathcal V}
\newcommand{\Dim}[1]{\text{Dim}(#1)}
\newcommand{\Tr}[1]{\text{Tr}(#1)}
\newcommand{\noprint}[1]{\relax}
\title[Value and expectation no-go theorems]{Common denominator for
  value and expectation no-go theorems} 
\author{Andreas Blass}
\address{Mathematics Department\\
University of Michigan\\
Ann Arbor, MI 48109--1043, U.S.A.}
\email{ablass@umich.edu}
\author{Yuri Gurevich}
\address{Microsoft Research\\
One Microsoft Way\\
Redmond, WA 98052, U.S.A.}
\email{gurevich@microsoft.com}
\begin{document}
\maketitle

\begin{abstract}
Hidden-variable (HV) theories allege that a quantum state  describes
an ensemble of systems distinguished by the values of hidden
variables. No-go theorems assert that HV theories cannot match the
predictions of quantum theory. The present work started with repairing
flaws in the literature on no-go theorems asserting that HV theories
cannot predict the expectation values of measurements. That literature
gives one an impression that expectation no-go theorems subsume the
time-honored no-go theorems asserting that HV theories cannot predict
the possible values of measurements. But the two approaches speak
about different kinds of measurement. This hinders comparing them to
each other.  Only projection measurements are common to both. Here, we
sharpen the results of both approaches so that only projection
measurements are used.  This allows us to clarify the similarities and
differences between the two approaches. Neither one dominates the
other. 
\end{abstract}

\section{Introduction}
\label{sec:intro}

Hidden-variable theories allege that a state of a quantum system, even
if it is pure and thus contains as much information as quantum
mechanics permits, actually describes an ensemble of systems with
distinct values of some hidden variables. Once the values of these
variables are specified, the system becomes determinate or at least
more determinate than quantum mechanics says.  Thus the randomness in
quantum predictions results, entirely or partially, from the
randomness involved in selecting a member of the ensemble.

No-go theorems assert that, under reasonable assumptions, a hidden-variable interpretation cannot reproduce the predictions of quantum mechanics. In this paper, we examine two species of such theorems, \emph{value} no-go theorems and \emph{expectation} no-go theorems.
The value approach originated in the work of Bell \cite{bell64,bell66} and of Kochen and Specker \cite{ks} in the 1960's. Value no-go theorems establish that, under suitable hypotheses, hidden-variable theories cannot reproduce the predictions of quantum mechanics concerning the possible results of the measurements of observables.

The expectation approach was developed in the last decade by Spekkens \cite{spekkens} and by Ferrie, Emerson, and Morris \cite{ferrieA,ferrieB,ferrieC}, with \cite{ferrieC} giving the sharpest result.  In this approach, the discrepancy between hidden-variable theories and quantum mechanics appears in the predictions of the expectation values of the measurements of effects, i.e.\ the elements of POVMs, positive operator-valued measures.  There is no need to consider the actual values obtained by measurements or the probability distributions over these values.
%Effects are regarded as representing 0=or-1 questions, the probability of 1 for effect $E$ in state $\ket\psi$ being $\bra\psi E\ket\psi$.

In both cases, measurements are associated to Hermitian operators, but
they are different sorts of measurements.  In the value approach,
Hermitian operators serve as observables, and measuring one of them
produces a number in its spectrum.  In the expectation approach,
certain Hermitian operators serve as effects, and measuring one of
them produces 0 or 1, even if the spectrum consists entirely of other points.
The only Hermitian operators for which these two uses coincide are
projections.

We sharpen the results of both approaches so that only projection
measurements are used. Regarding the expectation approach, we
substantially weaken the hypotheses. We do not need arbitrary effects,
but only rank-1 projections. Accordingly, we need convex-linearity
only for the hidden-variable picture of states, not for that of
effects. Regarding the value approach, it turns out that rank-1
projections are sufficient in the finite dimensional case but not in
general. Finally, using a successful hidden-variable theory of John
Bell for a single qubit, we demonstrate that the expectation approach does not subsume
the value approach.

\section{Expectation No-Go Theorem}
\label{sec:exp}

\begin{definition}\label{def:exp}\rm
An \emph{expectation representation} for quantum systems described by a Hilbert space $\H$ is a triple $(\Lambda,\mu,F)$ where
\begin{itemize}
\item $\Lambda$ is a measurable space,
\item $\mu$ is a convex-linear map assigning to each density operator $\rho$ on $\H$ a probability measure $\mu(\rho)$ on $\Lambda$, and
\item $F$ is a map assigning to each rank-1 projection $E$ in $\H$ a measurable function $F(E)$ from $\Lambda$ to the real interval $[0,1]$.
\end{itemize}
It is required that for all density matrices $\rho$ and all rank-1 projections $E$
\begin{equation}\label{eq1}
\Tr{\rho\cdot E}=\int_\Lambda F(E)\,d\mu(\rho)
\end{equation}
\end{definition}

The convex linearity of $\mu$ means that $\mu(a_1\rho_1 + a_2\rho_2) = a_1\mu(\rho_1) + a_2\mu(\rho_2)$ whenever $a_1,a_2$ are nonnegative real numbers with sum 1.

The definition of expectation representation is similar to  Ferrie-Morris-Emerson's definition of the probability representation \cite{ferrieC} except that (i)~the domain of $F$ contains only rank-1 projections, rather than arbitrary effects, and (ii)~we do not (and cannot) require that $F$ be convex-linear.

Intuitively an expectation representation $(\Lambda,\mu,F)$ attempts to predict the expectation value of any rank-1 projection $E$ in a given mixed state $\rho$. The hidden variables are combined into one variable ranging over $\Lambda$. Further, $\mu(\rho)$ is the probability measure on $\Lambda$ determined by $\rho$, and $(F(E))(\lambda)$ is the probability of determining the effect $E$ at the subensemble of $\rho$ determined by $\lambda$. The left side of \eqref{eq1} is the expectation of $E$ in state $\rho$ predicted by quantum mechanics and the right side is the expectation of $F(E)$ in the ensemble described by $\mu(\rho)$.

But why is $\mu$ supposed to be convex linear? Well, mixed states have
physical meaning and so it is desirable that $\mu$ be defined on mixed
states as well. If you are a hidden-variable theorist, it is most
natural for you to think of a mixed state as a classical probabilistic
combination of the component states. This leads you to the convex
linearity of $\mu$. For example, if $\rho = \sum_{i=1}^k p_i\rho_i$
where $p_i$'s are nonnegative reals and $\sum p_i = 1$ then, by the
rules of probability theory, $(\mu(\rho))(S) = \sum p_i(\mu(\rho_i))(S)$
for any measurable $S\subseteq\Lambda$. Note, however, that you cannot
start with any wild probability distribution $\mu$ on pure states and
then extend it to mixed states by convex linearity. There is an
important constraint on $\mu$. The same mixed state $\rho$ may have
different representations as a convex combination of pure states; all
such representations must lead to the same probability measure
$\mu(\rho)$.

\begin{theorem}[First Bootstrapping Theorem]\label{thm:boot1}
Let $\H$ be a closed subspace of a Hilbert space $\H'$.  From any expectation representation for quantum systems described by $\H'$, one can directly construct such a representation for systems described by $\H$.
\end{theorem}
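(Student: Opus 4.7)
The plan is to pull the representation back along the inclusion $\H \subseteq \H'$. The key observation is that every density operator $\rho$ on $\H$ has a canonical extension $\tilde\rho$ to a density operator on $\H'$, namely the operator that acts as $\rho$ on $\H$ and as $0$ on $\H^\perp$; likewise every rank-1 projection $E$ on $\H$ extends to a rank-1 projection $\tilde E$ on $\H'$. The extension map is linear, preserves trace, and satisfies the product identity $\tilde\rho\,\tilde E = \widetilde{\rho E}$, since both sides act as $\rho E$ on $\H$ and as $0$ on $\H^\perp$.

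Given an expectation representation $(\Lambda',\mu',F')$ for $\H'$, I would define a representation for $\H$ by setting $\Lambda := \Lambda'$, $\mu(\rho) := \mu'(\tilde\rho)$, and $F(E) := F'(\tilde E)$. Three items need to be checked. First, $\mu$ is convex linear: the extension map $\rho \mapsto \tilde\rho$ is linear, hence convex linear, so the convex linearity of $\mu'$ transports to $\mu$. Second, $F(E)$ is a measurable function from $\Lambda$ to $[0,1]$: immediate, since $F'(\tilde E)$ is such a function. Third, equation \eqref{eq1} holds: using the product and trace identities,
\begin{equation*}
\Tr{\rho \cdot E} = \Tr{\tilde\rho \cdot \tilde E} = \int_{\Lambda'} F'(\tilde E)\, d\mu'(\tilde\rho) = \int_\Lambda F(E)\, d\mu(\rho),
\end{equation*}
where the middle equality is the defining property of $(\Lambda',\mu',F')$ applied to the extended operators.

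There is no serious obstacle. The only points requiring care are that extension by zero sends rank-1 projections on $\H$ to rank-1 projections on $\H'$ (trivially, as the range is unchanged), and that it preserves products and traces (routine, computed with any orthonormal basis of $\H$ enlarged to one of $\H'$). The construction is thus a direct pullback along the inclusion, which is the sense in which the theorem is a "bootstrapping."
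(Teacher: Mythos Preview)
Your proposal is correct and is essentially the same construction as the paper's: the paper writes your extension-by-zero map as $\bar\rho = i\circ\rho\circ p$ (with $i$ the inclusion and $p$ its adjoint, the orthogonal projection), and verifies the trace identity via cyclicity of trace and $p\circ i = \mathrm{id}_{\H}$ rather than via your product identity $\tilde\rho\,\tilde E = \widetilde{\rho E}$, but the underlying map and the logic are identical.
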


\begin{proof}
We construct an expectation representation $(\Lambda,\mu,F)$ for quantum systems described by $\H$ from any such representation $(\Lambda',\mu',F')$ for the larger Hilbert space $\H'$.  To begin, we set $\Lambda=\Lambda'$.

To define $\mu$ and $F$, we use the inclusion map $i:\H\to\H'$,
sending each element of $\H$ to itself considered as an element of
$\H'$, and we use its adjoint $p:\H'\to\H$, which is the orthogonal
projection of $\H'$ onto $\H$.  Any density operator $\rho$ over $\H$,
gives rise to a density operator $\bar\rho=i\circ\rho\circ p$ over
$\H'$.  Note that this expansion is very natural: If $\rho$
corresponds to a pure state $\ket\psi\in\H$, i.e., if
$\rho=\ket\psi\bra\psi$, then $\bar\rho$ corresponds to the same
$\ket\psi\in\H'$.  If, on the other hand, $\rho$ is a mixture of
states $\rho_i$, then $\bar\rho$ is the mixture, with the same
coefficients, of the $\overline{\rho_i}$.  Define
$\mu(\rho)=\mu'(\bar\rho)$.

The definition of $F$ is similar. For any rank-1 projection $E$ in $\H$, $\bar E=i\circ E\circ p$ is a rank-1 projection in $\H'$, and so we define $F(E)=F'(\bar E)$. If $E$ projects to the one-dimensional subspace spanned by $\ket\psi\in\H$, then $\bar E$ projects to the same subspace, now considered as a subspace of $\H'$.

This completes the definition of $\Lambda$, $\mu$ , and $F$.
Most of
the requirements in Definition~\ref{def:exp} are trivial to verify.
For the last requirement, the agreement between the expectation
computed as a trace in quantum mechanics and the expectation computed
as an integral in the expectation representation, it is useful to
notice first that $p\circ i$ is the identity operator on $\H$.  We can then
compute, for any density operator $\rho$ and any rank-1 projection
$E$ on $\H$,
\begin{align*}
  \int_\Lambda F(E)\,d\mu(\rho)&=\int_\Lambda F'(\bar E)\,d\mu'(\bar\rho)
  =\Tr{\bar\rho\bar E}
  =\Tr{i\circ\rho\circ p\circ i\circ E\circ p}\\
  &=\Tr{i\circ\rho\circ E\circ p}
  =\Tr{\rho\circ E\circ p\circ i} = \Tr{\rho\circ E},
\end{align*}
as required.
\end{proof}

\begin{theorem}[Expectation no-go theorem]\label{thm:exp}
  If the dimension of the Hilbert space $\H$ is at least 2 then there is no expectation representation for quantum systems described by $\H$.
\end{theorem}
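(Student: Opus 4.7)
The plan is to first reduce to the qubit case via Theorem~\ref{thm:boot1}. Any Hilbert space with dimension at least $2$ contains a closed two-dimensional subspace, and an expectation representation for the larger space would yield one for the subspace; so it suffices to derive a contradiction from the assumption that $(\Lambda,\mu,F)$ is an expectation representation for a single qubit. I would parameterize the rank-1 projections by the Bloch sphere, writing $E_{\vec n}=\tfrac12(I+\vec n\cdot\vec\sigma)=\ket{\vec n}\bra{\vec n}$ for each unit $\vec n\in S^2$. Each such $E_{\vec n}$ doubles as a pure state and a rank-1 projection, and $\Tr{E_{\vec m}E_{\vec n}}=\tfrac12(1+\vec m\cdot\vec n)$.

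The main work is to upgrade the hidden-variable picture of pure states to one in which each $F(E_{\vec n})$ is essentially a $\{0,1\}$-indicator. Set $\nu:=\mu(I/2)$. Convex-linearity of $\mu$, applied to the decomposition $I/2=\tfrac12(E_{\vec n}+E_{-\vec n})$ valid for \emph{every} unit $\vec n$, gives $\mu(E_{\vec n})\leq 2\nu$, so Radon--Nikodym yields a density $g_{\vec n}\in[0,2]$ with $d\mu(E_{\vec n})=g_{\vec n}\,d\nu$ and $g_{\vec n}+g_{-\vec n}=2$ almost surely. Specializing \eqref{eq1} first to $\rho=E=E_{\vec n}$ and then to $\rho=E_{\vec n}$, $E=E_{-\vec n}$, and using the pointwise bounds $F\in[0,1]$ and $g_{\vec n}\in[0,2]$, one squeezes the two integrands to conclude $F(E_{\vec n})=1$ on $\{g_{\vec n}>0\}$ and $F(E_{\vec n})=0$ on $\{g_{\vec n}<2\}$. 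Hence $\{0<g_{\vec n}<2\}$ is $\nu$-null; setting $A_{\vec n}:=\{g_{\vec n}=2\}$, one has $\nu(A_{\vec n})=\tfrac12$ and $F(E_{\vec n})=\mathbf{1}_{A_{\vec n}}$ $\nu$-a.e.

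The finish is a Kochen--Specker-flavored counting argument. With $\rho=E_{\vec n}$ and $E=E_{\vec m}$, equation~\eqref{eq1} becomes $2\nu(A_{\vec m}\cap A_{\vec n})=\tfrac12(1+\vec m\cdot\vec n)$, i.e., $\nu(A_{\vec m}\cap A_{\vec n})=\tfrac14(1+\vec m\cdot\vec n)$. Choosing three coplanar unit vectors $\vec n_1,\vec n_2,\vec n_3$ at $120^\circ$ apart gives pairwise inner products $-1/2$, hence pairwise intersection measures $1/8$, and inclusion--exclusion yields
\[
\nu(A_{\vec n_1}\cup A_{\vec n_2}\cup A_{\vec n_3})\;\geq\;3\cdot\tfrac12-3\cdot\tfrac18\;=\;\tfrac98\;>\;1,
\]
the desired contradiction. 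The main obstacle is the squeezing step that forces $F(E_{\vec n})$ to be an indicator: it is where convex-linearity of $\mu$---the one nontrivial hypothesis available---really gets used, and it requires a little care with null sets, though this is mitigated by the fact that only finitely many (in fact three) vectors $\vec n$ enter the final contradiction.
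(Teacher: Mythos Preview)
Your argument is correct, but it follows a genuinely different route from the paper's. The paper does mention that bootstrapping to $\Dim\H=2$ would suffice, but then chooses instead to carry out a dimension-free proof: it linearly extends $\mu$ to all of the trace class $\T$, invokes the duality $\T'\cong\B$ (Schatten) to associate to each $f\in\F$ an operator $[\nu f]\in\B$ with $\Tr{[\nu f]\rho}=\int f\,d\mu(\rho)$, shows $[\nu F(E)]=E$ and $[\nu 1]=I$, and then proves the operator-theoretic Lemma~\ref{lem:e4} (existence of $H$ with $H$, $A-H$, $B-H$, $I-A-B+H$ all positive) by taking $H=[\nu(\min\{F(A),F(B)\})]$. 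The contradiction comes from applying this to $A=\ket0\bra0$ and $B=\ket+\bra+$.

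Your proof trades that functional-analytic machinery for an explicitly two-dimensional, measure-theoretic argument: Radon--Nikodym with respect to $\nu=\mu(I/2)$ plus the antipodal identity $g_{\vec n}+g_{-\vec n}=2$ lets you squeeze each $F(E_{\vec n})$ to a genuine indicator, after which \eqref{eq1} becomes the exact formula $\nu(A_{\vec m}\cap A_{\vec n})=\tfrac14(1+\vec m\cdot\vec n)$ and a three-vector Bonferroni bound finishes it. This is more elementary---no trace-class duality, no operator positivity lemma---and it makes the ``hidden variable becomes deterministic'' phenomenon very explicit; on the other hand, the paper's approach works uniformly in every dimension and isolates a clean structural obstruction (Lemma~\ref{lem:e4}) that has content beyond the qubit case. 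One small remark: when you pass from $F(E_{\vec m})=\mathbf 1_{A_{\vec m}}$ $\nu$-a.e.\ to the computation of $\int F(E_{\vec m})\,d\mu(E_{\vec n})$, you are implicitly using $\mu(E_{\vec n})\ll\nu$; you have this, but it is worth saying.
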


We cannot expect any sort of no-go result in lower dimensions, because quantum theory in Hilbert spaces of dimensions 0 and 1 is trivial and therefore classical.
By the First Bootstrapping Theorem, it suffices to prove Theorem~\ref{thm:exp} just in the case $\Dim{\H}=2$.
But we find Ferrie-Morris-Emerson's proof that works directly for
all dimensions \cite{ferrieC} instructive, and we adjust it to prove
Theorem~\ref{thm:exp}.
The adjustment involves adding some details and observing that a
drastically reduced domain of $F$ suffices. The adjustment also
involves making a little correction. Ferrie et al.\ quoted an
erroneous
%%% added a backslash after "et al." to tell TeX not to enlarge the
%%% space after the period; TeX thought that the period is the end of a
%%% sentence. 
result of Bugajski \cite{bugajski} which needs some additional
hypotheses to become correct. Fortunately for Ferrie et al., those
hypotheses hold in their situation.

\begin{proof}
The proof involves several normed vector spaces.
\begin{itemize}
\item $\B$ is the real Banach space of bounded self-adjoint operators
  $\H\to\H$ with norm\\ $\nm A=\sup\{\nm{Ax}:x\in \H,\,\nm x=1\}$.
%%% Corrected $x\in A$ to $x\in \H$
\item $\F$ is the real vector space of bounded, measurable, real-valued functions on $\Lambda$ with norm\\ $\nm  f = \sup \{|f(\lambda)|: \lambda\in\Lambda\}$.
\item $\M$ is the real vector space of bounded, signed, real-valued measures on $\Lambda$ with the total variation norm $\nm\mu = \mu_+(\Lambda) + \mu_-(\Lambda)$ where $\mu=\mu_+-\mu_-$ and $\mu_+$ and $\mu_-$ are positive measures with disjoint supports.
\item $\T$ is the vector subspace of $\B$ consisting of the trace-class operators. These are the operators $A$ whose spectrum consists of real eigenvalues $\alpha_i$ such that the sum $\sum_i|\alpha_i|$ is finite; eigenvalues with multiplicity $>1$ are repeated in this list, and the continuous spectrum is empty or $\{0\}$. The sum $\sum_i|\alpha_i|$ serves as the norm of $A$ in $\T$. The sum $\sum_i\alpha_i$ of eigenvalues themselves (rather than their absolute values) is the trace of $A$. Note that density operators are positive trace-class operators of trace 1.
\end{itemize}

In the rest of the proof, by default, operators, transformations and
%%% Pluralized "transformations"
functionals are bounded  and of course linear. Suppose, toward a contradiction, that we have an expectation representation $(\Lambda,\mu,F)$ for some $\H$ with $\Dim{\H}\ge2$.

\begin{lemma}\label{lem:e1}
$\mu$ can be extended in a unique way to a transformation, also denoted $\mu$, from all of $\T$ into $\M$.
\end{lemma}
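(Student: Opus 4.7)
The plan is to extend $\mu$ to $\T$ by linearity, exploiting the fact that every $A \in \T$ is a real linear combination of density operators. First I would handle positive trace-class operators: if $A \geq 0$ and $A \ne 0$, set $c = \Tr A > 0$ and define $\mu(A) = c \cdot \mu(A/c)$, which is forced by homogeneity; set $\mu(0) = 0$. Next, I would invoke the spectral theorem to write any $A \in \T$ as $A = A_+ - A_-$, where $A_\pm$ are positive trace-class operators supported on the positive and negative eigenspaces respectively (the Jordan decomposition of $A$), and define $\mu(A) = \mu(A_+) - \mu(A_-)$.

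The first key step is to establish additivity on positive trace-class operators: $\mu(A + B) = \mu(A) + \mu(B)$ when $A, B \geq 0$. This is where the convex-linearity hypothesis on $\mu$ enters. Setting $a = \Tr A$ and $b = \Tr B$ with $a + b > 0$, the density operator $(A+B)/(a+b)$ equals the convex combination $(a/(a+b))(A/a) + (b/(a+b))(B/b)$ of density operators, so convex linearity followed by clearing denominators yields the identity (the case $a + b = 0$ is trivial). With additivity in hand, well-definedness of the extension is straightforward: if also $A = B' - C'$ with $B', C' \geq 0$, then $B' + A_- = C' + A_+$ as positive operators, so additivity gives $\mu(B') + \mu(A_-) = \mu(C') + \mu(A_+)$, hence $\mu(B') - \mu(C') = \mu(A_+) - \mu(A_-)$.

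Linearity of the extension then follows from homogeneity on positive parts combined with additivity, applied to Jordan decompositions of sums and scalar multiples. Boundedness is immediate: for any density operator $\rho$, $\mu(\rho)$ is a probability measure, so $\|\mu(\rho)\| = 1$ in total-variation norm, giving $\|\mu(A)\| = \Tr A = \|A\|_\T$ for positive $A$; for general $A = A_+ - A_-$,
\begin{equation*}
\|\mu(A)\| \leq \|\mu(A_+)\| + \|\mu(A_-)\| = \Tr{A_+} + \Tr{A_-} = \|A\|_\T.
\end{equation*}
Uniqueness is automatic: any linear extension is determined by its values on density operators, and the Jordan decomposition exhibits every $A \in \T$ as an explicit real linear combination of nonnegative scalar multiples of density operators.

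I expect the main obstacle to be the well-definedness of $\mu(A) := \mu(A_+) - \mu(A_-)$, equivalently additivity on positive trace-class operators. This is the only structurally nontrivial issue, and it is precisely what the convex-linearity clause in Definition~\ref{def:exp} is designed to guarantee; the remaining verifications are bookkeeping.
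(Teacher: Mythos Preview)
Your proposal is correct and follows essentially the same route as the paper: write any $A\in\T$ as a difference of positive trace-class operators via the spectral (Jordan) decomposition, rescale each positive part to a density operator, and use the convex-linearity hypothesis to verify that the resulting definition is independent of the chosen decomposition; boundedness and uniqueness are handled just as you describe. The only cosmetic difference is that the paper packages the argument as ``$A=b\rho+c\sigma$ for two density operators $\rho,\sigma$'' and checks well-definedness directly, whereas you isolate additivity on positive operators as an intermediate step---but the underlying computation is the same.
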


\begin{proof}[Proof of Lemma~\ref{lem:e1}]
Every $A\in\T$ can be written as a linear combination of two density operators. Indeed, if $\nm A>0$ and $A$ is positive then $\Tr{A}=\nm A$ and $A={\nm A}\rho$ where $\rho=\frac{A}{\nm A}$. In general, it suffices to represent $A$ as the difference $B-C$ of positive trace-class operators. Choose $A_+$ (resp.\ $A_-$) to have the same positive (resp.\ negative) eigenvalues and corresponding eigenspaces as $A$ and be identically zero on all the eigenspaces corresponding to the remaining eigenvalues.  The desired $B = A_+$ and $C = - A_-$.

If $A$ is a linear combination $b\rho+c\sigma$ of two density
operators, define $\mu(A)=b\mu(\rho)+c\mu(\sigma)$.
Using the convex linearity of $\mu$ on the density operators, it is easy to check that if $A$ has another such representation $b'\rho'+c'\sigma'$ then
$b\mu(\rho) + c\mu(\sigma) = b'\mu(\rho') + c'\mu(\sigma')$ which means that $\mu(A)$ is well-defined.

The uniqueness of the extension is obvious. It remains to check that
the extended $\mu$ is bounded. In fact, we show more, namely that
%%% Deleted "i.e." after "namely that".
$\nm{\mu(A)} \le1$ if $\nm A \le1$. 
So let $A\in\T$ and $\nm A \le1$. As we saw above, there are
positive trace-class operators $B,C$ such that $A = B-C$. Then $A =
{\nm B}\rho - {\nm C}\sigma$ for some density operators $\rho, \sigma$
where $b,c\ge0$ and $b+c={\nm A}\le1$. Now, $\mu(\rho)$ and
$\mu(\sigma)$ are measures with norm 1.  So $\nm{\mu(A)} \le b{\mu(\rho)}
+ c{\mu(\sigma)}\le b+c \le 1$.
\end{proof}

Let $\M'$ be the space of the functionals $\M\to\R$ where $\R$ is the
set of real numbers. Similarly let $\T'$ be the space of the
%%% Deleted on of two conecutive occurrences of "the"1
functionals $\T\to\R$. $\mu$ gives rise to a dual transformation
$\mu': \M'\to\T'$ that sends any $h\in\M'$ to $\mu'(h) = h\circ\mu$ so
that 
\begin{equation}\label{eq2}
\mu'(h)(A) = h(\mu(A))\quad \text{for all $h\in\M'$ and all $A\in\T$}.
\end{equation}
%%% added right parenthesis at the end of the equation
Every measurable function $f\in\F$ induces a functional $\bar f \in\M'$ by integration: $\bar f(\mu)=\int_\Lambda f\,d\mu$. This gives rise to a transformation $\nu: \F\to\T'$ that sends every $f$ to $\mu'(\bar f)$. Specifying $h$ to $\bar f$ in Equation~\ref{eq2} gives
\begin{equation}\label{eq3}
(\nu f)(A) = \int_\Lambda f\,d\mu(A)\quad
\text{for all $f\in\F$ and all $A\in\T$}.
\end{equation}
Here and below we omit the parentheses around the argument of $\nu$.

\begin{lemma}\label{lem:e2}
For every $f\in\F$, there is a unique $B\in\B$ with $(\nu f)(\rho) =
\Tr{B\cdot\rho}$ for all density operators $\rho$. 
\end{lemma}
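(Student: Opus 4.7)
The plan is to represent $\nu f$---which is a priori only an element of $\T'$---as the trace pairing against a unique bounded self-adjoint operator $B \in \B$. This is the standard Schatten duality between trace-class and bounded self-adjoint operators on $\H$.

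First, I would verify that $\nu f$ is a bounded linear functional on $\T$. Linearity is immediate from the construction of $\nu$. For boundedness, combine Lemma~\ref{lem:e1}, which gives $\nm{\mu(A)} \le \nm{A}$ for all $A \in \T$, with the elementary inequality $|\int_\Lambda f\,d\sigma| \le \nm{f} \cdot \nm{\sigma}$ valid for every $\sigma \in \M$. Via \eqref{eq3} this yields $|(\nu f)(A)| \le \nm{f} \cdot \nm{A}$, so $\nu f$ belongs to the topological dual of $\T$.

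Next, I would invoke the classical duality: the dual of the real Banach space of self-adjoint trace-class operators on $\H$ (with the trace norm) is canonically isometric to $\B$ (with the operator norm) via the pairing $(B,A) \mapsto \Tr{BA}$. Applied to the functional $\nu f$, this produces a $B \in \B$ with $(\nu f)(A) = \Tr{BA}$ for every $A \in \T$. Restricting $A$ to density operators gives the stated identity.

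For uniqueness one does not even need the full duality result. If $B_1, B_2 \in \B$ both satisfy $(\nu f)(\rho) = \Tr{B_i\rho}$ for every density operator $\rho$, then taking $\rho = \ket\psi\bra\psi$ for an arbitrary unit $\psi \in \H$ yields $\bra\psi(B_1-B_2)\ket\psi = 0$, forcing the self-adjoint operator $B_1 - B_2$ to vanish. The main obstacle is the existence half, i.e., the appeal to Schatten duality. A self-contained alternative would be to build $B$ directly from its quadratic form $\psi \mapsto (\nu f)(\ket\psi\bra\psi)$, recover the associated sesquilinear form by polarization using that $\ket\psi\bra\phi + \ket\phi\bra\psi$ and $i(\ket\psi\bra\phi - \ket\phi\bra\psi)$ lie in $\T$, and then verify boundedness by the estimate from the first step; this is more elementary but notationally heavier.
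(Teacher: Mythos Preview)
Your proposal is correct and follows essentially the same route as the paper: invoke the Schatten duality $\T'\cong\B$ via $B\mapsto\Tr{B\cdot{-}}$ to represent the bounded functional $\nu f\in\T'$, then restrict to density operators. The paper's uniqueness argument uses that density operators span $\T$, whereas you test on rank-1 projections and use self-adjointness; both are standard and your explicit boundedness check via Lemma~\ref{lem:e1} is a detail the paper leaves implicit.
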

%%% Changed $A$ to $\rho$, as in our discussion.

\begin{proof}[Proof of Lemma~\ref{lem:e2}]
Every $B\in\B$ induces a functional $\bar B\in\T'$ by $\bar
B(A)=\Tr{B\cdot A}$. Here $\Tr{B\cdot\rho}$ is well-defined because
the product of a bounded operator and a trace-class operator is again
in the trace class \cite[Lemma~3, p.\ 38]{schatten}.

The map $B\mapsto\bar B$ is an isometric isomorphism between $\B$ and $\T'$ \cite[Theorem~2, p.~47]{schatten}.
So, for every $X\in\T'$, there is a unique $B_X\in\B$ such that $X(A) = \Tr{B_X\cdot A}$ for all $A\in\T$. Furthermore, there is a unique $B_X\in\B$ such that $X(\rho) = \Tr{B_X\cdot\rho}$ for all density operators $\rho$.  This is because, as we showed above, the linear span of the density matrices is the whole space $\T$.  The lemma follows because every $\nu f$ belongs to $\T'$.
\end{proof}
%%% changed the $T'$ at the end of the proof to $\T'$

For any $f\in\F$, the unique operator $B$ with $(\nu f)(\rho) =
\Tr{B\cdot\rho}$ for all $\rho$ will be denoted $[\nu f]$. 
%%% Again changed $A$ to $\rho$

\begin{lemma}\label{lem:e3}
$[\nu F(E)] = E$ for every rank-1 projection $E$, and $[\nu1]=I$ where 1 is the constant function with value 1 and $I$ is the unit matrix.
\end{lemma}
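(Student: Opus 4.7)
The plan is to unwind the definitions and appeal to the uniqueness in Lemma~\ref{lem:e2}. For the first assertion, I would take $f = F(E)$ in Equation~\eqref{eq3}, which gives $(\nu F(E))(\rho) = \int_\Lambda F(E)\, d\mu(\rho)$ for every density operator $\rho$. By the defining property \eqref{eq1} of an expectation representation, the right-hand side equals $\Tr{\rho\cdot E} = \Tr{E\cdot\rho}$. Thus $E$ itself is an operator $B \in \B$ satisfying $(\nu F(E))(\rho) = \Tr{B\cdot\rho}$ for all density operators $\rho$, and by the uniqueness clause in Lemma~\ref{lem:e2} we conclude $[\nu F(E)] = E$.

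For the second assertion, I would again use Equation~\eqref{eq3}, now with the constant function $f = 1$, to get $(\nu 1)(\rho) = \int_\Lambda 1\, d\mu(\rho) = \mu(\rho)(\Lambda)$. Since $\mu(\rho)$ is a probability measure by the definition of an expectation representation, this integral equals $1$. On the other hand, for any density operator $\rho$ we have $\Tr{I\cdot\rho} = \Tr{\rho} = 1$. Hence $I$ is an operator in $\B$ satisfying $(\nu 1)(\rho) = \Tr{I\cdot\rho}$ for all density operators $\rho$, and uniqueness from Lemma~\ref{lem:e2} gives $[\nu 1] = I$.

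There is really no obstacle here: both statements are one-line computations chaining together \eqref{eq1}, \eqref{eq3}, the fact that $\mu(\rho)$ is a probability measure, and the uniqueness from the previous lemma. The only thing to be careful about is that the uniqueness in Lemma~\ref{lem:e2} is stated only against density operators $\rho$, not against all trace-class $A$; but that is exactly the form in which it is used, so no extra argument is needed. The lemma thereby identifies $\nu F$ with the inclusion of rank-1 projections into $\B$ (via the map $B \mapsto \bar B$) and sends the unit of $\F$ to the unit of $\B$, which is what subsequent steps of the no-go argument will exploit.
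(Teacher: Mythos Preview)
Your proof is correct and follows essentially the same approach as the paper's own proof: specialize equation~\eqref{eq3} to $f=F(E)$ (respectively $f=1$), combine with~\eqref{eq1} (respectively with $\mu(\rho)(\Lambda)=1$), and invoke the uniqueness clause of Lemma~\ref{lem:e2}. If anything, your write-up is slightly tidier than the paper's, which contains some redundant restatements of the same identity.
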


\begin{proof}[Proof of Lemma~\ref{lem:e3}]
Lemma~\ref{lem:e2} and equation~\eqref{eq3} give

\begin{equation}\label{eq4}
\Tr{[\nu f]\cdot\rho}=\int_\Lambda f\,d\mu(\rho)\quad
\text{for every density operator }\rho.
\end{equation}

Equations~\eqref{eq1} and \eqref{eq4} imply
\begin{equation}\label{eq5}
\Tr{\rho\cdot E} = \int_\Lambda F(E)\,d\mu(\rho)
                 = \Tr{[\nu F(E)]\cdot\rho}
\end{equation}%
%%% Changed f_E (which I think was used in earlier versions of the
%%% paper) to F(E) 
for every density operator $\rho$.

The right sides of Equations~\eqref{eq1} and \eqref{eq4} coincide if
we specify  $f$ to $F(E)$. 
%%% Again changed f_E to F(E). Deleted the part about "specify $A$ to
%%% $\rho$ since there's no $A$ here. And changed $g$ to $f$ to match
%%% the notation in (1).
Therefore their left sides are equal.

\begin{equation*}
\Tr{E\rho} = \Tr{[\nu F(E)]\rho}
\end{equation*}
%%% Again changed f_E to F(E)

We now invoke the last clause in Definition~\ref{def:exp} to find that, for
all rank-1 projections $E$ and all density matrices $\rho$,
\[
\Tr{E\rho}=\int_\Lambda F(E)\,d\mu(\rho) = \Tr{[\nu F(E)]\rho)}.
\]
But this is, as we saw in the proof of Lemma~\ref{lem:e2}, enough to show that
$[\nu F(E)]=E$.

By Lemma~\ref{lem:e2}, we see that $\mu'(1)$ is the unique operator that satisfies, for all $\rho$,
\[
\Tr{[\nu1]\rho}=\int_\Lambda\,d\mu(\rho) = (\mu(\rho))(\Lambda)=1=\Tr\rho
=\Tr{I\rho},
\]
where the third equality comes from the fact that $\mu$ maps density matrices to  probability measures. Thus, $[\nu1]=I$.
\end{proof}

\begin{lemma}\label{lem:e4}
  For any two rank-1 projections $A,B$ of $\H$, there exists an operator $H\in\B$ such that all four of $H$, $A-H$, $B-H$, and $I-A-B+H$ are positive operators.
\end{lemma}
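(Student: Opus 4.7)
The plan is to construct $H$ directly from the data of the assumed expectation representation. Define a measurable function $G\colon \Lambda \to [0,1]$ by $G(\lambda) = \min\{F(A)(\lambda),\,F(B)(\lambda)\}$; since $F(A), F(B) \in \F$ and $\min$ is continuous, $G \in \F$. Let $H = [\nu G] \in \B$ be the unique bounded self-adjoint operator, produced by Lemma~\ref{lem:e2}, that satisfies $\Tr{H\rho} = \int_\Lambda G\,d\mu(\rho)$ for every density operator $\rho$. I will then verify each of the four positivity claims by testing against an arbitrary density operator $\rho$ and invoking the standard fact that a self-adjoint $X$ is positive precisely when $\Tr{X\rho} \ge 0$ for all density $\rho$ (equivalently $\bra{\psi} X \ket{\psi} \ge 0$ for every unit vector $\ket{\psi}$).

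Each verification reduces to a pointwise inequality for bounded measurable functions on $\Lambda$, which then integrates against the positive measure $\mu(\rho)$ to yield the required trace inequality. For $H$ itself, the integrand $G$ is manifestly non-negative. For $A - H$, equations~\eqref{eq1} and \eqref{eq4} give $\Tr{(A-H)\rho} = \int_\Lambda (F(A) - G)\,d\mu(\rho)$ with $F(A) - \min(F(A), F(B)) \ge 0$ pointwise, and the case $B - H$ is symmetric. For $I - A - B + H$, the identity $\min(x,y) + \max(x,y) = x + y$ rewrites the integrand as $1 - \max(F(A), F(B))$, which lies in $[0,1]$ because $F$ maps into $[0,1]$.

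I do not expect any serious technical obstacle. The construction is explicit and the inequalities are elementary, and all the machinery has already been prepared by Lemmas~\ref{lem:e1}--\ref{lem:e3}. The conceptual content is that $G = \min(F(A), F(B))$ plays, in the hidden-variable picture, the role of the ``both outcomes are $1$'' event for a hypothetical simultaneous measurement of $A$ and $B$; the lemma thus asserts that an expectation representation forces every pair of rank-1 projections to behave as if jointly measurable as two-outcome POVMs, which is precisely the structural coincidence that will drive the contradiction to follow.
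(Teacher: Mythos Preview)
Your proposal is correct and takes essentially the same approach as the paper: the paper also sets $H=[\nu h]$ with $h=\min\{F(A),F(B)\}$, verifies the four pointwise inequalities $h\ge0$, $F(A)-h\ge0$, $F(B)-h\ge0$, and $1-F(A)-F(B)+h=1-\max\{F(A),F(B)\}\ge0$, and then uses that $[\nu\,\cdot\,]$ sends nonnegative functions to positive operators (together with $[\nu F(E)]=E$ and $[\nu 1]=I$ from Lemma~\ref{lem:e3}). The only difference is organizational---the paper packages the monotonicity of $[\nu\,\cdot\,]$ and the existence of $h$ as two separate intermediate claims---but the substance is identical.
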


\begin{proof} [Proof of Lemma~\ref{lem:e4}]
Recall that an operator $A$ is said to be
positive if $\bra\psi A\ket\psi\geq0$ for all $\ket\psi\in\H$ and
that $A\leq B$ means that $B-A$ is positive.  A function $f\in\F$ is \emph{nonnegative} if  $f(\lambda)\geq0$ for all $\lambda\in\Lambda$.

\begin{claim} \label{cla:e1} If $f\in\scr F$ is nonnegative then $[\nu f]$ is a
  positive operator.  Therefore, if $f\leq g$ pointwise in $\F$ then
  $[\nu f]\le [\nu g]$ in $\B$.
\end{claim}

\begin{proof}[Proof of Claim~\ref{cla:e1}]
  The second assertion follows immediately from the first applied to
  $g-f$, because $\nu$ is linear.  To prove the first assertion,
  suppose $f\in\scr F$ is nonnegative, and let $\ket\psi$ be any
  vector in $\H$. The conclusion we want to deduce, $\bra\psi
  [\nu f]\ket\psi\geq0$, is obvious if $\ket\psi=0$, so we may assume
  that \ket\psi\ is a non-zero vector.  Normalizing it, we may assume
  further that its length is 1.  Then $\ket\psi\bra\psi$ is a density operator
  and therefore $\mu(\ket\psi\bra\psi)$ is a measure. Using
  equation~\eqref{eq5}, we compute
\[
\bra\psi [\nu f]\ket\psi=\Tr{[\nu f]\ket\psi\bra\psi}
=\int_\Lambda f\,d\mu(\ket\psi\bra\psi)\geq0,
\]
where we have used that both the measure $\mu(\ket\psi\bra\psi)$ and
the integrand $f$ are nonnegative.
\end{proof}

Let $\F_{[0,1]}$ be the subset of $\F$ comprising the functions all of whose values are in the interval $[0,1]$.

\begin{claim}\label{cla:e2}
  For any $f,g\in\F_{[0,1]}$ there exists $h\in\F_{[0,1]}$ such that all four of $h$, $f-h$, $g-h$, and $1-f-g+h$ are nonnegative.
\end{claim}

\begin{proof}[Proof of Claim~\ref{cla:e2}]
  Define $h(\lambda)=\min\{f(\lambda),g(\lambda)\}$ for all
  $\lambda\in\Lambda$.  Then the first three of the assertions in the
  lemma are obvious, and the fourth becomes obvious if we observe that 
  $f+g-h=\max\{f,g\}\leq 1$.
\end{proof}

Now we are ready to complete the proof of Lemma~\ref{lem:e4}.
Apply Claim~\ref{cla:e2} 
%%%Changed ``Lemma'' to Claim'' as we discussed
with $f=F(A)$ and $g=F(B)$,
%%% Changed 2 occurrences of $\mu$ on the preceding line to $F$.
let $h$ be the function given by the lemma, and let $H=[\nu(h)]$.  
%%% Here and below, I added  square brackets around the values of
%%% $\nu$; as far as I can tell, they're needed for correctness.
The  nonnegativity of $h$, $f-h$, $g-h$, and $1-f-g+h$ implies,  by 
Claim~\ref{cla:e1}, 
the positivity of $[\nu(h)]=H$, $[\nu(F(A)-h)]=A-H$,
$[\nu(F(B)-h)]=B-H$, 
%%% Changed two ocurrenc3es of S to F, as we discussed
and $[\nu(1-F(A)-F(B)+h)]=I-A-B+H$, where we have also used the
linearity of $\nu$, the fact that $[\nu(1)]=I$, and the formula
$[\nu(F(A))]=A$ for all $A$ in the domain of $F$.
%%% Again changed $\mu$ to $F$ twice
\end{proof}

Now we are ready to prove Theorem~\ref{thm:exp}.
Let us apply Lemma~\ref{lem:e3} to two specific rank-1 projections.  Fix
two orthonormal vectors \ket0 and \ket1. (This is where we use that
$\H$ has dimension at least 2.)  Let $\ket+=(\ket0+\ket1)/\sqrt2$.
We use the projections $A=\ket0\bra0$ and $B=\ket+\bra+$ to the
subspaces spanned by $\ket0$ and $\ket+$.  Let $H$ be as in
Lemma~\ref{lem:e4} for these projections $A$ and $B$.

From the positivity of $H$ and of $A-H$, we get that
$0\leq\bra1H\ket1$ and that
\[
0\leq\bra1(A-H)\ket1=\bra1A\ket1-\bra1H\ket1=-\bra1H\ket1,
\]
where we have used that \ket1, being orthogonal to \ket0, is
annihilated by $A$.  Combining the two inequalities, we infer that
$\bra1H\ket1=0$ and therefore, since $H$ is positive, $H\ket1=0$.
Similarly, using the orthogonal vectors $\ket+$ and
$\ket-=\ket0-\ket1)/\sqrt2$ in place of \ket0 and \ket1, we obtain
$H\ket-=0$.  So, being linear, $H$ is identically zero on the subspace
of $\H$ spanned by \ket1 and \ket-; note that \ket0 is in this
subspace, so we have $H\ket0=0$.

Now we use the positivity of $I-A-B+H$.  Since $H\ket0=0$, we
can compute
\[
0\leq\bra0(I-A-B+H)\ket0=\sq{0|0}-\bra0A\ket0-\bra0B\ket0=
1-1-\frac1{\sqrt2}=\frac{-1}{\sqrt2}.
\]
This contradiction completes the proof of the theorem.
\end{proof}

\begin{remark}[Symmetry or the lack of thereof]\rm
In view of the idea of symmetry or even-handedness suggested by
Spekkens \cite{spekkens}, one might ask whether there is a dual
version of Theorem~\ref{thm:exp}, that is, a version that requires
convex-linearity for effects but looks only at pure states and does not
require any convex-linearity for states.
The answer is no; with such requirements there is a trivial example of a successful hidden-variable theory, regardless of the dimension of the Hilbert space.  The theory can be concisely described as taking the quantum state itself as the ``hidden'' variable.  In more detail, let $\Lambda$ be the set of all pure states.  Let $\mu$ assign to each operator $\ket\psi\bra\psi$ the probability measure on $\Lambda$ concentrated at the point $\lambda_{\ket\psi}$ that corresponds to the vector $\ket\psi$.  Let $F$ assign to each effect $E$ the function on $\Lambda$ defined by
\[
F(E)(\lambda_{\ket\psi})=\bra\psi E\ket\psi.
\]
We have trivially arranged for this to give the correct expectation
for any effect $E$ and any pure state \ket\psi.  The formula for
$F(E)$ is clearly convex-linear (in fact, linear) as a function of
$E$.  Of course, $\mu$ cannot be extended convex-linearly to mixed
states, so that Theorem~\ref{thm:exp} does not apply.
\end{remark}

\section{Value No-Go Theorems}
\label{sec:val}
Value no-go theorems assert that hidden-variable theories cannot even produce the correct outcomes for individual measurements, let alone the correct probabilities or expectation values.  Such theorems considerably predated the expectation no-go theorems considered in the preceding section.  Value no-go theorems were first established by Bell \cite{bell64,bell66} and then by Kochen and Specker \cite{ks}; we shall also refer to the user-friendly exposition given by Mermin \cite{mermin}.  To formulate value no-go theorems, one must specify what ``correct outcomes for individual measurements'' means.

\begin{definition} \label{valmap}
Let $\H$ be a Hilbert space, and let $\O$ be a set of observables, i.e., self-adjoint operators on $\H$. A \emph{valuation} for $\O$ in $\H$ is a function $v$ assigning to each observable $A\in\O$ a number $v(A)$ in the spectrum of $A$, in such a way that $(v(A_1),\dots,v(A_n))$ is in the joint spectrum $\sigma(A_1,\dots,A_n)$ of $(A_1,\dots,A_n)$
whenever $A_1,\dots,A_n$ are pairwise commuting.
\end{definition}

The intention behind this definition is that, in a hidden-variable theory, a quantum state represents an ensemble of individual systems, each of which has definite values for observables. That is, each individual system has a valuation associated to it, describing what values would be obtained if we were to measure observable properties of the system.  A believer in such a hidden-variable theory would expect a valuation for the set of all self-adjoint operators on $\H$, unless there were superselection rules rendering some such operators unobservable.

Before we proceed, we recall the notion of joint spectra \cite[Section~6.5]{spectral}.

\begin{definition}
The \emph{joint spectrum} $\sigma(A_1,\dots,A_n)$ of pairwise
commuting, self-adjoint operators $A_1,\dots,A_n$ on a Hilbert space
$\H$ is a subset of $\R^n$. If $A_1,\dots,A_n$ are simultaneously
diagonalizable then $(\lambda_1,\dots,\lambda_n)
\in\sigma(A_1,\dots,A_n)$ iff there is a non-zero vector $\ket\psi$
with $A_i\ket\psi=\lambda_i\ket\psi$ for $i=1,\dots,n$. In general,
$(\lambda_1,\dots,\lambda_n) \in\sigma(A_1,\dots,A_n)$ iff for every
$\eps>0$ there is a unit vector $\ket\psi\in\H$ with
$\nm{A_i\ket\psi-\lambda_i\ket\psi}<\eps$ for $i=1,\dots,n$.
\end{definition}

\begin{proposition}\label{pro:jspec}\
For any continuous function $f:\R^n\to\R$,\\ $f(A_1,\dots,A_n)=0$ if and only if $f$ vanishes identically on $\sigma(A_1,\dots,A_n)$.
\end{proposition}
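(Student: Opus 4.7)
The plan is to invoke the joint spectral measure. By the spectral theorem for pairwise commuting self-adjoint operators (see \cite{spectral}), there is a unique projection-valued Borel measure $E$ on $\R^n$ whose support equals $\sigma(A_1,\dots,A_n)$ and which satisfies $A_i = \int \lambda_i\, dE(\lambda)$ for each $i$ and, more generally, $g(A_1,\dots,A_n) = \int g(\lambda)\, dE(\lambda)$ for any bounded Borel $g$. In particular, if $f$ vanishes identically on $\sigma(A_1,\dots,A_n)$, then $f(A_1,\dots,A_n) = \int f\, dE = 0$, since $E$ ignores the complement of the joint spectrum; that settles the easy direction.

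For the converse I would argue contrapositively. Suppose $f(\lambda^0) \neq 0$ for some $\lambda^0 \in \sigma(A_1,\dots,A_n)$. By continuity of $f$ there exist an open neighborhood $U$ of $\lambda^0$ and a constant $\delta > 0$ with $|f(\lambda)| \geq \delta$ for every $\lambda \in U$. Since $\lambda^0$ lies in the support of $E$, the spectral projection $E(U)$ is nonzero; pick any unit vector $\ket\psi$ in its range, so that the scalar measure $\mu_\psi(S) = \bra\psi E(S)\ket\psi$ is a probability measure concentrated on $U$. Then
\[
\nm{f(A_1,\dots,A_n)\ket\psi}^2 = \int_U |f(\lambda)|^2\, d\mu_\psi(\lambda) \geq \delta^2 > 0,
\]
contradicting $f(A_1,\dots,A_n) = 0$.

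The main technical burden lies in the invocation of the joint spectral measure together with its support property, which bundles the Gelfand representation of the commutative $C^*$-algebra generated by $I$ and the $A_i$ with the identification of its character space with $\sigma(A_1,\dots,A_n)$. A more self-contained alternative would combine the approximate-eigenvector description of $\sigma(A_1,\dots,A_n)$ given in the preceding definition, the spectral mapping theorem for polynomials (showing that $p(\lambda) \in \sigma(p(A_1,\dots,A_n))$ whenever $\lambda \in \sigma(A_1,\dots,A_n)$), and a Stone--Weierstrass approximation of $f$ by polynomials on a cube containing the (compact) joint spectrum; this route is longer but uses no measure-theoretic machinery beyond what is already implicit in the paper.
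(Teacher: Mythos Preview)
Your argument via the joint spectral measure is correct and is essentially the standard route. The paper itself does not prove the proposition in the body of the text; it merely points to \cite[p.~155]{spectral} and defers a detailed argument to the companion article \cite[\S4.1]{G228}. What you have written is precisely the kind of proof one would expect to find behind those citations: invoke the projection-valued measure $E$ supported on $\sigma(A_1,\dots,A_n)$, read off the easy direction from $\int f\,dE=0$ when $f$ vanishes on the support, and for the converse use continuity and the support property to produce a unit vector on which $f(A_1,\dots,A_n)$ does not vanish.

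One small caveat about your closing paragraph. The proposition, as stated in the paper, covers arbitrary (possibly unbounded) self-adjoint $A_i$, so the joint spectrum need not be compact; the Stone--Weierstrass alternative you sketch would therefore need an additional localization step (e.g., first multiply by a compactly supported bump function dominating a neighborhood of the point where $f$ fails to vanish) before approximating by polynomials. Your primary spectral-measure proof is unaffected by this, since you only use a bounded neighborhood $U$ of a single spectral point, on which $f$ is automatically bounded by continuity.
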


The proposition is implicit in the statement, on page~155 of
\cite{spectral}, that ``most of Section~1, Subsection~4, about
functions of one operator,''  
%%% Deleted "a" after "one" (I checked the source of the quotation.)
can be repeated in the context of
several commuting operators.  We give a detailed proof of the
proposition in \cite[\S4.1]{G228}. 

\begin{theorem}[\cite{bell66,ks,mermin}]\label{thm:dim3}
If $\Dim{\H}=3$ then there is a finite set $\O$ of rank~1 projections for which no valuation exists.
\end{theorem}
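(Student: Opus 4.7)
The approach is to reduce the claim to the classical Kochen--Specker coloring problem.

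First I would unpack Definition~\ref{valmap} for rank-1 projections. Each rank-1 projection $P$ has spectrum $\{0,1\}$, so any valuation satisfies $v(P)\in\{0,1\}$. Two commuting rank-1 projections are either equal or orthogonal, and in the orthogonal case their joint spectrum is $\{(0,0),(1,0),(0,1)\}$, forbidding a valuation from assigning $1$ to both. More importantly, if $P_1,P_2,P_3$ is a triple of pairwise orthogonal rank-1 projections in $\H$ --- i.e., the projections onto the three axes of some orthonormal basis of $\H$ --- then the $P_i$ are simultaneously diagonalizable and their joint spectrum is exactly $\{(1,0,0),(0,1,0),(0,0,1)\}$, so a valuation must send precisely one of $P_1,P_2,P_3$ to $1$ and the other two to $0$.

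Next I would pass from projections to rays: identify each rank-1 projection $P$ with a unit vector $\ket\psi$ spanning its range, and recode a valuation $v$ on $\O$ as a two-coloring of the corresponding set $S$ of unit vectors, calling $\ket\psi$ \emph{black} when $v(P)=1$ and \emph{white} when $v(P)=0$. By the previous paragraph, the existence of a valuation on $\O$ is equivalent to the existence of a coloring of $S$ such that in every orthonormal basis of $\H$ whose three vectors all lie in $S$, exactly one is black. So it suffices to exhibit a finite set $S$ of unit vectors in a $3$-dimensional Hilbert space that contains enough orthonormal bases to make any such coloring impossible.

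This is precisely the combinatorial/geometric content of the original Kochen--Specker theorem \cite{ks}, proved with an explicit configuration of $117$ rays in $\R^3$; smaller and conceptually cleaner configurations were found later (notably by Peres) and are nicely presented in Mermin's survey \cite{mermin}. I would simply cite one of these constructions rather than reproduce it, taking $\O$ to consist of the rank-1 projections onto the rays of that configuration. The expected main obstacle --- and the real content of the theorem --- is precisely this finite ray construction; the translation of the joint-spectrum condition into the one-in-three coloring rule is routine, so essentially all the work is carried by the KS-style combinatorial/geometric argument that we import from the cited sources.
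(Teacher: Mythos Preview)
Your proposal is correct and follows essentially the same route as the paper: both reduce the statement to the classical Kochen--Specker uncolorability result and cite \cite{ks} (and Mermin's exposition \cite{mermin}) rather than reproduce the finite ray configuration. Your write-up is in fact a bit more explicit than the paper's, which merely points to the literature; the paper additionally remarks that Bell's argument in \cite{bell66} can be converted into such a set and that Mermin's rank-2 projections $S_i^2$ can be replaced by their rank-1 complements $I-S_i^2$. One tiny quibble: your claimed \emph{equivalence} between valuations on $\O$ and ``exactly one black per basis'' colorings tacitly assumes every orthogonal pair in $S$ extends to a full basis within $S$; only the forward implication is needed for the no-go conclusion, and that direction is unproblematic.
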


The proof of Theorem~\ref{thm:dim3} can be derived from the work of
Bell \cite[Section~5]{bell66}, and we do that explicitly in
\cite[\S4.3]{G228}. The construction given by Kochen and Specker
\cite{ks}  provides the desired $\O$ more directly.  The proof of
Theorem~1 in \cite{ks} uses a Boolean algebra generated by a finite
set of one-dimensional subspaces of $\H$, and it shows that the
projections to those subspaces constitute an $\O$ of the required
sort.  Mermin's elegant exposition \cite[Section~IV]{mermin} deals
instead with squares $S_i^2$ of certain spin-components of a spin-1
particle, but these are projections to 2-dimensional subspaces of
$\H$, and the complementary rank-1 projections $I-S_i^2$ serve as the
desired $\O$.

\begin{theorem}[Second Bootstrapping Theorem]\label{thm:boot2}
Suppose $\H\subseteq\H'$ are finite-dimensional Hilbert spaces.  Suppose further that $\O$ is a finite set of rank-1 projections of $\H$ for which no valuation exists.  Then there is a finite set $\O'$ of rank-1 projections of $\H'$ for which no valuation exists.
\end{theorem}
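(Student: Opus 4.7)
The plan is to lift $\O$ to $\H'$, attach projections to an orthonormal basis of $\H^\perp$, and use Theorem~\ref{thm:dim3} in carefully chosen 3-dimensional subspaces to block the spurious valuations that remain. Let $m=\Dim{\H'}-\Dim{\H}$; if $m=0$ take $\O'=\O$. Otherwise fix orthonormal bases $\ket{e_1},\dots,\ket{e_d}$ of $\H$ and $\ket{f_1},\dots,\ket{f_m}$ of $\H^\perp$, and for each $j=1,\dots,m$ consider the three-dimensional subspace $V_j=\text{span}(\ket{e_1},\ket{e_2},\ket{f_j})\subseteq\H'$, which, by Theorem~\ref{thm:dim3}, contains a finite Kochen--Specker set $\O_{V_j}$. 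Set
\[
  \O' \;=\; \bar\O \;\cup\; \{\bar E_i\}_{i=1}^d \;\cup\; \{F_j\}_{j=1}^m \;\cup\; \bigcup_{j=1}^m \O_{V_j},
\]
where $\bar E_i=\ket{e_i}\bra{e_i}$, $F_j=\ket{f_j}\bra{f_j}$, and $\bar E$ denotes the rank-1 projection of $\H'$ onto the same one-dimensional subspace as $E$ (the convention from the proof of Theorem~\ref{thm:boot1}). This is a finite set of rank-1 projections of $\H'$.

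Suppose for contradiction that $v'$ is a valuation on $\O'$. Since the $F_j$ are pairwise orthogonal and the joint spectrum of two orthogonal rank-1 projections excludes the pair $(1,1)$, at most one $F_j$ receives the value $1$. If $v'(F_j)=0$ for every $j$, define $v(E)=v'(\bar E)$ on $\O$: for any pairwise commuting (hence pairwise orthogonal) $E_1,\dots,E_n\in\O$ the family $\{\bar E_1,\dots,\bar E_n,F_1,\dots,F_m\}$ commutes in $\H'$, and its joint-spectrum constraint (combined with $v'(F_j)=0$) yields ``at most one $1$'' when $n<d$ and ``exactly one $1$'' when $n=d$, because $\bar E_1,\dots,\bar E_d$ together with $F_1,\dots,F_m$ span $\H'$. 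This is precisely the joint-spectrum condition in $\H$, so $v$ is a valuation on $\O$, contradicting the hypothesis.

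Otherwise $v'(F_{j_0})=1$ for a unique $j_0$. Pairing $F_{j_0}$ with each rank-1 projection in $\O'$ orthogonal to it forces $v'(F_j)=0$ for $j\ne j_0$ and $v'(\bar E)=0$ for every $\bar E\in\bar\O\cup\{\bar E_i\}_{i=1}^d$. I would then show that $v'$ restricted to $\O_{V_{j_0}}$ is a valuation on $\O_{V_{j_0}}$ inside the three-dimensional space $V_{j_0}$, which contradicts the Kochen--Specker property of $\O_{V_{j_0}}$. The point is that $V_{j_0}^\perp$ (taken inside $\H'$) is spanned by $\ket{e_3},\dots,\ket{e_d}$ together with $\{\ket{f_j}:j\ne j_0\}$, whose projections all lie in $\O'$ and have value $0$; consequently any orthogonal triple $\{P_1,P_2,P_3\}\subseteq\O_{V_{j_0}}$ spanning $V_{j_0}$ extends, by adjoining these projections, to an orthonormal basis of $\H'$ contained in $\O'$, and the joint-spectrum constraint for that basis forces $v'(P_1)+v'(P_2)+v'(P_3)=1$---exactly what is needed to make $v'|_{\O_{V_{j_0}}}$ a valuation in $V_{j_0}$.

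The hard step is the second case. The naive construction $\O'=\bar\O\cup\{F_j\}$ fails because it admits a spurious valuation sending a single $F_{j_0}$ to $1$ and every other element to $0$; the auxiliary Kochen--Specker sets $\O_{V_j}$ are introduced precisely to rule this out. The choice $V_j=\text{span}(\ket{e_1},\ket{e_2},\ket{f_j})$ is dictated by the requirement that $V_j^\perp$ be spanned by projections already present in $\O'$, so that the extension-to-an-orthonormal-basis-of-$\H'$ argument is available.
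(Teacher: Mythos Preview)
Your argument is correct, but the paper's proof is considerably simpler and does not invoke Theorem~\ref{thm:dim3} as an auxiliary tool. The paper proceeds by induction on $\Dim{\H'}$, reducing to the case $\Dim{\H'}=\Dim{\H}+1$. For that step it picks any unit vector $\ket\psi\in\H'$, transports $\O$ to the isomorphic hyperplane $\ket\psi^\bot$, and adjoins the single projection $\ket\psi\bra\psi$ to form $\O_1$. Since any valuation on $\O_1$ sending $\ket\psi\bra\psi$ to $0$ would restrict to a valuation on $\O$, every valuation on $\O_1$ must send $\ket\psi\bra\psi$ to $1$. Repeating with a second unit vector $\ket{\psi'}\perp\ket\psi$ gives $\O_2$, and $\O'=\O_1\cup\O_2$ admits no valuation because no valuation can assign $1$ to both of the orthogonal projections $\ket\psi\bra\psi$ and $\ket{\psi'}\bra{\psi'}$.

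The contrast is instructive. Your construction handles the jump from $\H$ to $\H'$ in one shot and yields an explicit, non-recursive description of $\O'$, but at the price of importing a fresh Kochen--Specker configuration $\O_{V_j}$ for each extra dimension and of the somewhat delicate ``extend to a full basis of $\H'$'' bookkeeping in the second case. The paper's inductive argument needs no external input beyond the given $\O$ and is essentially two lines per added dimension; the only subtlety is recognising that forcing $v(\ket\psi\bra\psi)=1$ can be played off against a second orthogonal direction. Two minor points on your write-up: the projections $\bar E_1,\bar E_2$ are never actually used and could be dropped from $\O'$; and you should say explicitly that for non-spanning commuting families in $\O_{V_{j_0}}$ the joint spectrum is the same whether computed in $V_{j_0}$ or in $\H'$, so only the spanning-triple case requires the extension argument.
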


This is our second bootstrapping theorem. Intuitively, such dimension bootstrapping results are to be expected. If hidden-variable theories could explain the behavior of quantum systems described by the larger Hilbert space, say $\H'$, then they could also provide an explanation for systems described by the subspace $\H$.  The latter systems are, after all, just a special case of the former, consisting of the pure states that happen to lie in $\H$ or mixtures of such states. But often no-go theorems give much more information than just the impossibility of matching the predictions of quantum-mechanics with a hidden-variable theory.  They establish that hidden-variable theories must fail in very specific ways. It is not so obvious that these specific sorts of failures, once established for a Hilbert space $\H$, necessarily also apply to its
superspaces $\H'$.

\begin{proof}
  Clearly, if two Hilbert spaces are isomorphic and if one of them has
  a finite set $\O$ of rank-1 projections with no valuation, then
  the other also has such a set.  It suffices to conjugate the
  projections in $\O$ by any isomorphism between the two
  spaces. Thus, the existence of such a set $\O$ depends only on the
  dimension of the Hilbert space, not on the specific space.

  Proceeding by induction on the dimension of $\H'$, we see that
  it suffices to prove the theorem in the case where $\dim(\H')=\dim(\H)+1$.  Given such $\H$ and $\H'$, let \ket\psi\
  be any unit vector in $\H'$, and observe that its orthogonal
  complement, $\ket\psi^\bot$, is a subspace of $\H'$ of the same
  dimension as $\H$ and thus isomorphic to $\H$.  By the induction
  hypothesis, this subspace $\ket\psi^\bot$ has a finite set $\O$ of
  rank-1 projections for which no valuation exists.  Each element of
  $\O$ can be regarded as a rank-1 projection of $\H'$; indeed,
  if the projection was given by $\ket\phi\bra\phi$ in
  $\ket\psi^\bot$, then we can just interpret the same formula
  $\ket\phi\bra\phi$ in $\H'$, using the same unit vector
  $\ket\phi\in\ket\psi^\bot$

  Let $\O_1$ consist of all the projections from $\O$,
  interpreted as projections of $\H'$, together with one
  additional rank-1 projection, namely $\ket{\psi}\bra{\psi}$.  What
  can a valuation $v$ for $\O_1$ look like? It must send
  $\ket{\psi}\bra{\psi}$ to one of its eigenvalues, 0 or 1.

Suppose first that $v(\ket{\psi}\bra{\psi})=0$. Then, using the fact
that $\ket{\psi}\bra{\psi}$ commutes with all the other elements of
$\O_1$, we easily compute that what $v$ does to those other
elements amounts to a valuation for $\O$.  But $\O$ was chosen so
that it has no valuation, and so we cannot have
$v(\ket{\psi}\bra{\psi})=0$.  Therefore $v(\ket{\psi}\bra{\psi})=1$.  (It
follows that $v$ maps the projections associated to all the other
elements of $\O'$ to zero, but we shall not need this fact.)

We have thus shown that any valuation for the finite set $\O_1$
must send $\ket\psi\bra\psi$ to 1.  Repeat the argument for another
unit vector $\ket{\psi'}$ that is orthogonal to \ket\psi.  There is a
finite set $\O_2$ of rank-1 projections such that any valuation
for $\O_2$ must send \ket{\psi'}\bra{\psi'} to 1.  No valuation
can send both \ket\psi\bra\psi\ and \ket{\psi'}\bra{\psi'} to 1,
because their joint spectrum consists of only $(1,0)$ and $(0,1)$.
Therefore, there can be no valuation for the union $\O_1\cup\O_2$, which thus serves as the $\O'$ required by the theorem.
\end{proof}

\begin{theorem}[Value no-go theorem]\label{thm:val}
Suppose that the dimension of the Hilbert space is at least 3.
\begin{enumerate}
\item There is a finite set $\O$ of projections for which no valuation exists.
\item If the dimension is finite then there is a finite set $\O$
  of rank~1 projections for which no valuation exists.
\end{enumerate}
\end{theorem}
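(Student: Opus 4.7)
My plan is to handle part~(2) first by a dimension induction, then derive part~(1): immediately from (2) in the finite-dimensional case, and by a separate argument when $\dim\H$ is infinite.

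For part~(2), I proceed by induction on $\dim\H$, starting from $\dim\H=3$, where Theorem~\ref{thm:dim3} supplies the base case. The inductive step is exactly Theorem~\ref{thm:boot2}: from a finite set of rank-1 projections with no valuation in some $n$-dimensional Hilbert space, we obtain one in any $(n+1)$-dimensional Hilbert space. After finitely many iterations we reach $\dim\H$. Since rank-1 projections are projections, this simultaneously proves part~(1) in the finite-dimensional case.

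For part~(1) with $\dim\H=\infty$, I would adapt the proof of Theorem~\ref{thm:boot2}, replacing the rank-1 projection $\ket\psi\bra\psi$ used there by a projection of infinite rank. Fix a 3-dimensional subspace $\H_0\subseteq\H$ and, by Theorem~\ref{thm:dim3}, a finite set $\O_0$ of rank-1 projections of $\H_0$ for which no valuation exists. Let $P_0^\perp$ be the projection of $\H$ onto $\H_0^\perp$. A direct computation shows that, for any pairwise commuting $E_1,\ldots,E_k\in\O_0$, the joint spectrum of $(E_1,\ldots,E_k,P_0^\perp)$ in $\H$ equals $\sigma_{\H_0}(E_1,\ldots,E_k)\times\{0\}$ together with the one extra tuple $(0,\ldots,0,1)$ arising from vectors in $\H_0^\perp$. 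This is exactly the pattern that underlies the proof of Theorem~\ref{thm:boot2}, with $P_0^\perp$ now playing the role of $\ket\psi\bra\psi$. Consequently, any valuation $v$ of $\O_0\cup\{P_0^\perp\}$ with $v(P_0^\perp)=0$ restricts to a valuation of $\O_0$ in $\H_0$, contradicting the choice of $\O_0$.

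The main obstacle is the remaining case $v(P_0^\perp)=1$, which merely forces $v$ to vanish on $\O_0$ and is consistent so far. Following the pattern of Theorem~\ref{thm:boot2}, I would adjoin a second 3-dimensional subspace $\H_1\subseteq\H$ with its own finite KS set $\O_1$ (also from Theorem~\ref{thm:dim3}) and its complementary projection $P_1^\perp$; the same argument applied to $\O_1\cup\{P_1^\perp\}$ forces $v(P_1^\perp)=1$. To finish, one must rule out $v(P_0^\perp)=v(P_1^\perp)=1$. The delicate point is that in infinite dimension no pair of 3-dimensional subspaces can span $\H$, so the tuple $(1,1)$ a priori belongs to the joint spectrum of $(P_0^\perp,P_1^\perp)$ by virtue of vectors in $(\H_0+\H_1)^\perp$. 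Eliminating this last possibility will require adjoining at least one further auxiliary projection whose joint-spectrum identities with $P_0^\perp$ and $P_1^\perp$ force the final contradiction; this supplementary construction is where I expect the real work of the proof to lie.
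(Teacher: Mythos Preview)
Your handling of part~(2), and of part~(1) in finite dimension, is correct and coincides with the paper's argument: invoke Theorem~\ref{thm:dim3} at dimension~$3$ and iterate the Second Bootstrapping Theorem.

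For infinite-dimensional $\H$, however, the gap you flag is real, and the scheme you outline cannot be completed along the lines you suggest. No matter how many $3$-dimensional subspaces $\H_0,\ldots,\H_k$ you adjoin (taken pairwise orthogonal so that everything commutes), the intersection $\bigcap_i\H_i^\perp$ remains infinite-dimensional, so $(1,\ldots,1)$ always lies in the joint spectrum of $(P_0^\perp,\ldots,P_k^\perp)$. Consequently the assignment that sends every rank-$1$ projection in every $\H_i$ to $0$ and every $P_i^\perp$ to $1$ is a valuation on the whole set you have built, for every $k$; this is just a variant of the paper's observation that the constant~$0$ function is a valuation on all finite-rank projections when $\dim\H=\infty$. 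Adjoining one more auxiliary projection $R$ commuting with the $P_i^\perp$ does not help either: $R$ takes some value $r\in\{0,1\}$ on $\bigcap_i\H_i^\perp$, and $(1,\ldots,1,r)$ is then in the enlarged joint spectrum. Any set manufactured from finite-rank projections together with their (cofinite-rank) complements admits this escape.

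The paper bypasses the difficulty with a different construction. It writes $\H\cong\K\otimes\L$ with $\dim\K=3$, takes a finite Kochen--Specker set $\O$ of rank-$1$ projections on $\K$ from Theorem~\ref{thm:dim3}, and sets $\O'=\{P\otimes I_{\L}:P\in\O\}$. Commuting tuples in $\O$ correspond to commuting tuples in $\O'$ with identical joint spectra, so a valuation for $\O'$ would yield one for $\O$. The projections $P\otimes I_{\L}$ have both infinite rank and infinite corank, which is precisely what blocks the ``value everything of finite rank~$0$ and everything of cofinite rank~$1$'' strategy that defeats your approach.
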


The desired finite sets of projections are constructed explicitly in the proof. The finiteness assumption in part (2) of the theorem cannot be omitted. If $\Dim{\H}$ is infinite, then the set $\O$ of all finite-rank projections admits a valuation, namely the constant zero function.  This works because the definition of ``valuation'' imposes constraints on only finitely many observables at a time.
%\begin{yg}
%I wonder whether we can tweak the definition of valuation to exclude the constant zero function.
%\end{yg}
%\begin{ab}
%  I think that, for infinite-dimensional Hilbert spaces, there are
%  value maps for the rank 1 projections, other than just the zero
%  map.  Choose one 1-dimensional subspace and assign value 1 to the
%  projection to this subspace, but value 0 to all the other rank 1
%  projections.  The reason this works is that (1) two rank 1
%  projections commute iff they are equal or orthogonal, and (2) any
%  finite family of mutually commuting and distinct rank 1 projections
%  have joint spectrum consisting of $(1,0,\dots,0),\ (0,1,\dots,0),\
%  \dots,\ (0,0,\dots,1)$, and $(0,0,\dots,0)$.  (I worried for a
%  couple of minutes about why this doesn't work for finite-dimensional
%  Hilbert spaces. The reason is that, when the ranges of the commuting
%  rank 1 projections span the whole space, then $(0,0,\dots,0)$ is not
%  in their joint spectrum.)  So any tweak would need to do more than
%  just exclude the all-zero valuation.
%\end{ab}

\begin{proof}
When the dimension of $\H$ is greater than 3, but still finite, we use our Second Bootstrapping Theorem.  Notice that, if one merely wants a no-go theorem saying that some $\O$ has no valuation, then this bootstrapping is easy, as noted in \cite{bell64,ks,mermin}.  Work is needed only to get all the operators in $\O$ to be rank~1 projections.

It remains to treat the case of infinite-dimensional $\H$. Let $\K$ and $\L$ be Hilbert spaces, with $\dim(\K)=3$ and $\dim(\L)=\dim(\H)$.  Note that then their tensor product $\K\otimes\L$ has the same dimension as $\H$, so it can be identified with $\H$.

Let $\O$ be as in Theorem~\ref{thm:dim3} for the 3-dimensional $\K$.    Let
$\O'=\{P\otimes I_{\L}:P\in\O\}$, where $I_{\L}$ is the identity operator on $\L$.  Then $\O'$ is a set of infinite-rank projections of $\K\otimes\L=\H$, having the same algebraic structure as $\O$.  It follows that there is no valuation for $\O'$.
\end{proof}

Let's say that a projection $A$ on Hilbert space $\H$ is a \emph{rank-$n$ projection modulo identity} if either $A$ is of rank $n$ or else $\H$ splits into a tensor product $\K\otimes\L$ such that $\K$ is finite-dimensional and $A$ has the form $P\otimes I_{\L}$ where $P$ is of rank $n$ and $I_{\L}$ is the identity operator on $\L$. The proof of Theorem~\ref{thm:val} gives us the following corollary.

\begin{corollary}
If the dimension of the Hilbert space is at least 3 then there is a finite set of rank-1 projections modulo identity for which no valuation exists.
\end{corollary}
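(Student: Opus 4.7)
The plan is to observe that the proof of Theorem~\ref{thm:val} already produces exactly the projections the corollary calls for, so all that is needed is to verify that, in each case of that proof, the finite set of operators constructed consists of rank-1 projections modulo identity in the sense just defined. I will split along the same two cases used in the proof of Theorem~\ref{thm:val}.

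First, suppose $\Dim{\H}$ is finite. Then part (2) of Theorem~\ref{thm:val} directly produces a finite set $\O$ of honest rank-1 projections on $\H$ for which no valuation exists. Every rank-1 projection trivially satisfies the first clause of the definition of ``rank-1 projection modulo identity'' (taking $n=1$), so this set already works.

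Second, suppose $\Dim{\H}$ is infinite. I reuse the construction in the proof of Theorem~\ref{thm:val}: choose Hilbert spaces $\K$ and $\L$ with $\dim(\K)=3$ and $\dim(\L)=\dim(\H)$, identify $\H$ with $\K\otimes\L$, take $\O$ from Theorem~\ref{thm:dim3} (a finite set of rank-1 projections on $\K$ with no valuation), and set $\O'=\{P\otimes I_\L:P\in\O\}$. The proof of Theorem~\ref{thm:val} already showed that $\O'$ admits no valuation; and by construction each element of $\O'$ has exactly the form $P\otimes I_\L$ with $\K$ finite-dimensional and $P$ of rank~$1$, which is precisely the second clause of the definition. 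Hence every element of $\O'$ is a rank-1 projection modulo identity.

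Combining the two cases gives the desired finite set in every dimension $\ge 3$. I do not anticipate any real obstacle: the content is entirely in Theorems~\ref{thm:dim3} and~\ref{thm:val}, and the only thing to check is that the definition of ``rank-1 projection modulo identity'' has been designed precisely to cover both the genuine rank-1 projections used in finite dimensions and the $P\otimes I_\L$ operators used in the infinite-dimensional reduction.
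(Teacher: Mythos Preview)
Your proposal is correct and is exactly the argument the paper intends: the corollary is stated as an immediate consequence of the proof of Theorem~\ref{thm:val}, and you have simply spelled out that in the finite-dimensional case the set $\O$ from part~(2) already consists of genuine rank-1 projections, while in the infinite-dimensional case the set $\O'=\{P\otimes I_\L:P\in\O\}$ consists of rank-1 projections modulo identity by the second clause of the definition.
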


\section{One successful hidden-variable theory}
\label{sec:bell}

By reducing both species of no-go theorems to projection measurement,
where measurement as observable and measurement as effect coincide, we
made it easier to see similarities and differences.  No, the
expectation no-go theorem does not imply the value no-go theorem. But
the task of proving this claim formally, say for a given dimension
$d=\Dim{\H}$, is rather thankless. You have to construct a
counter-factual physical world where the expectation no-go theorem
holds but the value no-go theorem fails. There is, however, one
exceptional case, that of dimension~2. Theorem~\ref{thm:exp} assumes
$\Dim{\H}\ge2$ while Theorem~\ref{thm:val} assumes $\Dim{\H}\ge3$. So
what about dimension 2?

Bell developed, in \cite{bell64} and \cite{bell66}, a hidden-variable theory for a two-dimensional Hilbert space $\H$. Here we summarize the improved version of Bell's theory due to Mermin \cite{mermin}, we simplify part of Mermin's argument, and we explain why the theory doesn't contradict Theorem~\ref{thm:exp}.

In the rest of this section, we work in the two-dimensional Hilbert
space $\H$. Let $\V$ be the set of value maps $v$ for all the
observables on $\H$.  In each pure state $\psi$, the hidden variables
should determine a particular member of $\V$. 
%%% Changed $V$ to $\V$ at the end of the preceding line

\begin{definition}\label{def:val}\rm
A \emph{value representation} for quantum systems described by $\H$ is a pair $(\Lambda,V)$ where
\begin{itemize}
\item $\Lambda$ is a probability space and
\item $V$ a function $\psi\to V_\psi$ on the pure states such that every $V_\psi$ is a map $\lambda\to V_\psi^\lambda$ from (the sample space of) $\Lambda$ onto $\V$.
\end{itemize}
Further, we require that, for any pure state $\psi$ and any observable $A$, the expectation $\int_\Lambda V_\psi^\lambda(A)\: d\lambda$ of the eigenvalue of $A$ agrees with the prediction $\bra{\psi}A\ket{\psi}$ of quantum theory:
\begin{equation}\label{bell1}
 \int_\Lambda V_\psi^\lambda(A)\: d\lambda = \bra{\psi}A\ket{\psi}
\end{equation}
\end{definition}

\medskip
Definition~\ref{def:val} is narrowly tailored for our goals in this section; in the full paper we will give a general definition of value representation. Notice that, if a random variable (in our case, the eigenvalue of $A$ in $\psi$) takes only two values, then the expected value determines the probability distribution. A priori we should be speaking about commuting operators and joint spectra but things trivialize in the 2-dimensional case. Recall Proposition~\ref{pro:jspec} and notice that, in the 2-dimensional Hilbert space, if operators $A,B$ commute, then one of them is a polynomial function of the other.

\begin{theorem}\label{thm:bell}
There exists a value representation for the quantum systems described by the two-dimensional Hilbert system $\H$.
\end{theorem}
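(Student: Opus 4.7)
The plan is to implement Bell's hidden-variable theory for a qubit, simplified by Mermin, in Bloch-sphere coordinates. Identify pure states with unit vectors $\n\in S^2$ via $\ket\psi\bra\psi = (I+\n\cdot\s)/2$, and write every self-adjoint operator on $\H$ uniquely as $A = a_0 I + \a\cdot\s$ with $a_0\in\R$ and $\a\in\R^3$; then $\sigma(A) = \{a_0+|\a|,\, a_0-|\a|\}$ and $\bra\psi A\ket\psi = a_0 + \a\cdot\n$. I take $\Lambda = S^2$ equipped with the normalized rotation-invariant surface measure, and for each $\psi\leftrightarrow\n$ and $\vec\lambda\in S^2$ define
\[
V_\psi^{\vec\lambda}(a_0 I + \a\cdot\s) = a_0 + |\a|\cdot \mathrm{sgn}\bigl((\vec\lambda + \n)\cdot\a\bigr),
\]
with $V_\psi^{\vec\lambda}(a_0 I) = a_0$ when $\a=\vec 0$ and any fixed convention for $\mathrm{sgn}(0)$ (the set where the argument vanishes has $\vec\lambda$-measure zero for every nonzero $\a$).

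Next I would check that each $V_\psi^{\vec\lambda}$ is a valuation in the sense of Definition~\ref{valmap}. Spectrum-membership is built into the formula. For the joint-spectrum clause, the excerpt's remark that in dimension~$2$ pairwise commuting observables are polynomials in a single operator reduces the verification to the following: if $A_1,\ldots,A_n$ pairwise commute and are not all scalar, then there exist a unit vector $\vec m\in\R^3$ and reals $(a_i)_0,\alpha_i$ with $A_i = (a_i)_0 I + \alpha_i\vec m\cdot\s$. Writing $s := \mathrm{sgn}\bigl((\vec\lambda+\n)\cdot\vec m\bigr)\in\{\pm 1\}$, one has $\mathrm{sgn}\bigl((\vec\lambda+\n)\cdot\alpha_i\vec m\bigr) = \mathrm{sgn}(\alpha_i)\,s$, so $V_\psi^{\vec\lambda}(A_i) = (a_i)_0 + s\alpha_i$. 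The resulting tuple is the eigenvalue list of the $A_i$ on the common Bloch eigenvector in direction $s\vec m$ and therefore lies in $\sigma(A_1,\ldots,A_n)$.

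Equation~\eqref{bell1} then follows from the standard fact that, for $\vec\lambda$ uniform on $S^2$ and any unit $\hat a\in\R^3$, the real random variable $\vec\lambda\cdot\hat a$ is uniform on $[-1,1]$. With $\hat a = \a/|\a|$, this gives
\[
\int_\Lambda \mathrm{sgn}\bigl((\vec\lambda+\n)\cdot\a\bigr)\, d\vec\lambda
= 2\,P\bigl(\vec\lambda\cdot\hat a > -\n\cdot\hat a\bigr) - 1 = \n\cdot\hat a,
\]
and hence $\int_\Lambda V_\psi^{\vec\lambda}(A)\, d\vec\lambda = a_0 + \a\cdot\n = \bra\psi A\ket\psi$.

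The point requiring the most care is the joint-spectrum verification, since a valuation must respect joint spectra of arbitrary pairwise-commuting finite families, not only pairs; the reduction above to a single direction $\vec m$ (with the sign of $\alpha_i$ tracked carefully so that negative $\alpha_i$ and the ambiguity of $\vec m$ up to sign cause no trouble) is exactly what lets a single hidden variable $\vec\lambda$ coordinate all commuting measurements at once. This is essentially the simplification of Mermin's argument promised in the section introduction; the rest is the short expectation computation above.
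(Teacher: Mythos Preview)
Your construction is exactly the Bell--Mermin model the paper uses: the same hidden-variable space $S^2$ with uniform measure and the same valuation formula $V_{\n}^{\vec\lambda}(a_0I+\a\cdot\s)=a_0+|\a|\,\mathrm{sgn}\bigl((\vec\lambda+\n)\cdot\a\bigr)$. The only differences are presentational: you verify the joint-spectrum clause explicitly (the paper relegates this to the remark before the theorem that commuting observables in dimension~2 are polynomials in one another), and for the expectation you invoke directly that $\vec\lambda\cdot\hat a$ is uniform on $[-1,1]$, whereas the paper first rotates $\a$ to the $z$-axis and then appeals to Archimedes' hat-box theorem---which is the same fact.
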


\begin{proof}
Let $\vec\sigma$ be the triple of the Pauli matrices
$\displaystyle \sigma_x=
\begin{pmatrix}
  0&1\\1&0
\end{pmatrix}, \sigma_y=
\begin{pmatrix}
  0&-i\\i&0
\end{pmatrix}, \sigma_z=
\begin{pmatrix}
  1&0\\0&-1
\end{pmatrix}$.
For any unit vector $\vec n\in\R^3$, 
%%% Added "unit vector" because for general vectors the eigenvalues
%%% won't be as claimed here.
the dot product $\vec n\cdot\vec\sigma$ is a Hermitian operator with
eigenvalues $\pm1$. 
Every pure state of $\H$ is an eigenstate, for eigenvalue $+1$, of
$\vec n\cdot\vec\sigma$ for a unique $\vec n$.  We use the notation
\ket{\vec n} for this eigenstate. 

If \scr H represents the states of a spin-$\frac12$ particle, then the operator $\frac12\vec n\cdot\vec\sigma$ represents the spin component in the direction $\vec n$, and so \ket{\vec n} represents the state in which the spin is definitely aligned in the direction $\vec n$.  It is a special property of spin $\frac12$ that all pure states are of this form; for higher spins, a superposition of states with definite spin directions need not have a definite spin direction.

On $\H$, any Hermitian operator $A$ has the form $a_0I + \a\cdot\s$ for some scalar
$a_0\in\R$ and vector $\a\in\R^3$. The eigenvalues of $A$ are
$a_0\pm\nm{\a}$. Observables $a_0I+\a\cdot\s$ and $b_0I+\vec b\cdot\s$
commute if and only if $\a$ and $\vec b$ are either parallel or
antiparallel.

The desired probability space is the set $S^2$ of unit vectors in $\R^3$ with the uniform probability measure. Let $\m$ range over $S^2$. Then
\[
  V_{\n}^{\m} (a_0 I + \a\cdot\s) =
   \begin{cases}
   a_0 + \nm{\a}&\text{if }(\m+\n)\cdot\a \geq 0,\\
   a_0 - \nm{\a}&\text{if }(\m+\n)\cdot\a    < 0.
   \end{cases}
\]
It remains to check that
\begin{equation}\label{bell2}
\int_{S^2} V_\psi^{\m}(a_0 I + \a\cdot\s)\: d\m =
\bra{\n}(a_0 I + \a\cdot\s)\ket{\n}.
\end{equation}

We begin with a couple of simplifications.  First, we may assume that $a_0=0$, because a general $a_0$ would just be added to both sides of Equation~\eqref{bell2}.  Second, thanks to the rotational symmetry of the situation (where rotations are applied to all three of $\a$, $\n$ and $\m$), we may assume that the vector $\a$ points in the $z$-direction.  Finally, by scaling, we may assume that $\a=(0,0,1)$, so that the right side of Equation~\eqref{bell2} is $n_z$.

So our task is to prove that the average over $\m$ of the values assigned to $\sigma_z$ is $n_z$.  By definition, the value
assigned to $\sigma_z$ is $\pm1$, where the sign is chosen to agree
with that of $m_z+n_z$.  In view of how $\m$ is chosen, this
$m_z+n_z$ is the $z$-coordinate of a random point on the unit sphere
centered at $\n$.  So the question reduces to determining what
fraction of this sphere lies above the $x$-$y$ plane.

This plane cuts $S^2$ horizontally at a level $n_z$ below the sphere's
center.  By a theorem of Archimedes, when a sphere is cut
by a plane, its area is divided in the same ratio as the length of the
diameter perpendicular to the plane.   So the plane divides the sphere's area in
the ratio of $1+n_z$ (above the plane) to $1-n_z$ (below the plane).
That is, the value assigned to $\sigma_z$ is $+1$ with probability
$(1+n_z)/2$ and $-1$ with probability $(1-n_z)/2$.  Thus, the average
value of $\sigma_z$ is $n_z$, as required.
\end{proof}

Finally, we explain why Bell's theory doesn't contradict Theorem~\ref{thm:exp}. To obtain an expectation representation, we must extend the map $V$
convex-linearly to all density matrices. But no such extension exists.  Here is an example showing what goes wrong.
Consider the four pure states corresponding to spin in the directions
of the positive $x$, negative $x$, positive $z$ and negative $z$
axes.  The corresponding density operators are the projections
\[
\frac{I+\sigma_x}2,\quad\frac{I-\sigma_x}2,\quad
\frac{I+\sigma_z}2,\quad \frac{I-\sigma_z}2,
\]
respectively.  Averaging the first two with equal weights, we get
$\frac12 I$; averaging the last two gives the same result.  So a
convex-linear extension $T$ would have to assign to the density
operator $\frac12I$ the average of the probability measures assigned
to the pure states with spins in the $\pm x$ directions and also the
average of the probability measures assigned to pure states with spins
in the $\pm z$ directions.  But these two averages are visibly very
different.  The first is concentrated on the union of two unit spheres
tangent to the $y$-$z$-plane at the origin, while the second is
concentrated on the union of two unit spheres tangent to the
$x$-$y$-plane at the origin.

% Thus, Bell's hidden-variable theory for 2-dimensional
% $\H$ does not clash with the fact that Theorem~\ref{thm:exp} applies in the 2-dimensional case. To obtain an expectation representation, we must extend this map
% convex-linearly to all density matrices. 
%%% Commented out the preceding pargraph. Its first sentence is
%%% essentially repeated in the next paragraph, and its second
%%% sentence was already said earlier.

Thus, Bell's example of a hidden-variable theory for 2-dimensional
\scr H does not fit the assumptions in any of the expectation no-go
theorems.  It does not, therefore, clash with the fact that those
theorems, unlike the value no-go theorems, apply in the 2-dimensional
case.

\end{document}